\begin{document}
\title{Simultaneous Information and Energy Transmission with Finite Constellations} 

\author{Sadaf ul Zuhra, Samir M. Perlaza, and Eitan Altman
		
		\thanks{Sadaf ul Zuhra, Samir M. Perlaza, and Eitan Altman are with INRIA, Centre de Recherche de Sophia Antipolis - M\'{e}diterran\'{e}e, 2004  Route des Lucioles, 06902 Sophia Antipolis, France. $\lbrace$sadaf-ul.zuhra, samir.perlaza, eitan.altman$\rbrace$@inria.fr}
		\thanks{Samir M. Perlaza is also with the Electrical and Computer Engineering Department, Princeton University, Princeton, 08544 NJ, USA; and the Laboratoire de Math\'{e}matiques GAATI, Universit\'{e} de la Polyn\'{e}sie Fran\c{c}aise,  BP 6570, 98702 Faaa, French Polynesia.}
		\thanks{Eitan Altman is also with the Laboratoire d'Informatique d'Avignon (LIA), Universit\'{e} d'Avignon, 84911 Avignon, France; and with the Laboratory of Information, Network and Communication Sciences (LINCS), 75013 Paris, France.}
		\thanks{This research was supported in part by the European Commission through the H2020-MSCA-RISE-2019 program under grant 872172; in part by the Agence Nationale de la Recherche (ANR) through the project MAESTRO-5G (ANR-18-CE25-0012); and in part by the Fondation Mathématique Jacques Hadamard (FMJH) through the Programme Gaspard Monge.}
	}

\maketitle

\begin{abstract}
In this paper, the fundamental limits on the rates at which information and energy can be simultaneously transmitted over an additive white Gaussian noise channel are studied under the following assumptions: $(a)$ the channel is memoryless; $(b)$ the number of channel input symbols  (constellation size) and block length are finite; and $(c)$ the decoding error probability (DEP) and the energy outage probability (EOP) are bounded away from zero.  In particular, it is shown that the  limits on the maximum  information and energy transmission rates; and the minimum DEP and EOP, are essentially set by the type induced by the code used to perform the transmission. That is, the empirical frequency with which each channel input symbol appears in the codewords.  Using this observation, guidelines for optimal constellation design for simultaneous energy and information transmission are presented.
\end{abstract}

\section{Introduction} \label{sec:intro}

Nikola Tesla suggested that radio frequency signals can be used to simultaneously transmit  both information and energy in 1914 \cite{Tesla-Patent-1914}.  About two centuries later, simultaneous information and energy transmission (SIET), also known as simultaneous wireless information and power transfer (SWIPT) is one of the technologies that might be implemented in 6G communication systems in the near future, c.f., \cite{6GSpeculation} and\cite{6GSpeculationA}. SIET implies a fundamental trade-off between the amount of energy and information that can be simultaneously transmitted by a signal. This has been the subject of intense research, c.f.,~\cite{varshney2008transporting, amor2016fundamental, perlaza2018simultaneous, survey} and \cite{surveyA}. 
In \cite{varshney2008transporting}, a capacity-energy function is defined in order to determine the fundamental limit on the information transmission rate subject to the fact that the average energy at the channel output is not smaller than a given threshold. Therein, the underlying assumption is that  the communication duration in channel uses is infinitely long. This guarantees that the decoding error probability (DEP) and the energy outage probability (EOP) can be made arbitrarily close to zero, and thus, the focus is only on the information transmission rate and the energy transmission rate. This analysis has been extended to multi-user channels. In this case, the notion of information-energy capacity region generalizes to the set of all information and energy rate tuples that can be simultaneously achieved in the asymptotic block length regime \cite{amor2016fundamental}. For instance, the information-energy capacity region of the Gaussian multiple access channel is characterized in \cite{amor2016feedback}, whereas the information-energy capacity region of the Gaussian interference channel is approximated in  \cite{KhalfetGIC}.

A first attempt to study the fundamental limits of SIET under the assumption of finite  transmission duration with finite constellation sizes is presented in \cite{perlaza2018simultaneous}. Therein, the study is restricted to discrete memoryless channels and it is shown that finite transmission duration implies DEPs and EOPs that are bounded away from zero. 

This work contributes in this direction and considers the problem of SIET in additive white Gaussian noise channels considering finite transmission duration and finite constellation sizes.   
The main results in this work highlight the intuition that codes that uniformly use all channel input symbols are associated with high information rates, whereas, codes that exclusively use the channel input symbols that carry the largest amount of energy are associated with high energy rates. 
More specifically,  a characterization of the maximum information and energy rates and minimum DEP and EOP that can be simultaneously achieved is formulated  in terms of the type  the code induces on the set of channel input symbols. In this work, a type is understood in the sense of  the empirical frequency with which each channel input symbol appears in the codewords \cite{CsiszarMoT}.  
  
\section{System Model} \label{sec:system_model}
Consider a communication system formed by a transmitter, an information receiver (IR), and an energy harvester (EH). The objective of the transmitter is to simultaneously send information to the IR at a rate of $R$ bits per second; and energy to the EH at a rate of $B$ Joules per second over an additive white Gaussian noise (AWGN) channel. 
That is, given a channel input
${\boldsymbol x} = (x_1,x_2, \ldots, x_n)^{\sf{T}} \in \mathds{C}^{n}$, with $n \in \mathds{N}$, the outputs of the channel are the random vectors
\begin{subequations}\label{EqChannelModel}
\begin{IEEEeqnarray}{rcl}
    \label{eq:channel1}
    {\boldsymbol Y} & = & {\boldsymbol x} + {\boldsymbol N}_1, \mbox{ and }  \\
    \label{eq:channel2} 
    {\boldsymbol Z} & = & {\boldsymbol x} + {\boldsymbol N}_2,
\end{IEEEeqnarray}
\end{subequations}
where $n$ is the duration of the transmission in channel uses; and the vectors ${\boldsymbol Y} = (Y_1,Y_2, \ldots, Y_n)^{\sf{T}} \in \mathds{C}^{n}$ and ${\boldsymbol Z} = (Z_1,Z_2, \ldots, Z_n)^{\sf{T}} \in \mathds{C}^{n}$ are the inputs of the IR and the EH, respectively. The components of the random vectors ${\boldsymbol N}_1 = (N_{1,1}, N_{1,2}, \ldots, N_{1,n})^{\sf{T}}\in \mathds{C}^{n}$ and ${\boldsymbol N}_2 = (N_{2,1}, N_{2,2}, \ldots, N_{2,n})^{\sf{T}}\in \mathds{C}^{n}$ are independent and identically distributed. More specifically, for all $(i,j) \in \{1,2\} \times \{1,2, \ldots, n\}$,  $N_{i,j}$ is a complex circularly symmetric Gaussian random variable whose real and imaginary parts have zero means and variances $\frac{1}{2}\sigma^2$. 
\par
That is, for all $\boldsymbol{y} = (y_1, y_2, \ldots, y_n)^{\sf{T}} \in \mathds{C}^{n}$, for all $\boldsymbol{z} = (z_1, z_2, \ldots, z_n)^{\sf{T}} \in \mathds{C}^{n}$, and for all $\boldsymbol{x} = (x_1, x_2, \ldots, x_n)^{\sf{T}} \in \mathds{C}^{n}$, it holds that the joint probability density function of the channel outputs $(\boldsymbol{Y}, \boldsymbol{Z})$ satisfies $f_{\boldsymbol{Y Z}|\boldsymbol{X}}(\boldsymbol{y}, \boldsymbol{z}|\boldsymbol{x}) = f_{\boldsymbol{Y}|\boldsymbol{X}}(\boldsymbol{y}|\boldsymbol{x}) f_{\boldsymbol{Z}|\boldsymbol{X}}(\boldsymbol{z}|\boldsymbol{x})$, where  
\begin{IEEEeqnarray}{rcl} \label{EqYXdistribution}
    f_{\boldsymbol{Y}|\boldsymbol{X}}(\boldsymbol{y}|\boldsymbol{x}) & = & \prod_{t=1}^n f_{Y|X}(y_t|x_t) \mbox{ and }\\
    \label{eq:z_distribution}
    f_{\boldsymbol{Z}|\boldsymbol{X}}(\boldsymbol{z}|\boldsymbol{x}) & = & \prod_{t=1}^n f_{Z|X}(z_t|x_t),
\end{IEEEeqnarray}
and for all $t \in \lbrace 1,2, \ldots, n \rbrace$,
\begin{subequations}\label{EqDensities}
\begin{IEEEeqnarray}{l}
\nonumber
    f_{Y|X}(y_t|x_t) = \\
\label{EqYXdistribution2}    
    \frac{1}{\pi \sigma^2} \exp \left(- \frac{(\Re(y_t)-\Re(x_t))^2 +(\Im(y_t) - \Im(x_t))^2}{\sigma^2} \right), \quad \\
\nonumber
    f_{Z|X}(z_t|x_t) = \\
\label{eq:z_distribution2}    
    \frac{1}{\pi \sigma^2} \exp \left(- \frac{(\Re(z_t)-\Re(x_t))^2 +(\Im(z_t) - \Im(x_t))^2}{\sigma^2} \right).\quad
\end{IEEEeqnarray}
\end{subequations}
Within this framework, two tasks must be accomplished: information transmission and energy transmission. 

\subsection{Information Transmission} \label{subsec:information_transmission}
Assume that the information transmission takes place using a modulation scheme that uses $L$ symbols. That is, there is a set
\begin{equation}\label{EqCIsymbols}
\mathcal{X} \triangleq \{x^{(1)}, x^{(2)}, \ldots, x^{(L)}\} \subset \mathds{C}
\end{equation} that contains all possible channel input symbols, and  $L \triangleq \left| \mathcal{X} \right|$. 
Let $M$ be the number of message indices to be transmitted within $n$ channel uses. That is, $M \leqslant 2^{n\left\lfloor \log_2 L \right\rfloor}$ 
To reliably transmit a message index, the transmitter uses an $(n,M)$-code defined as follows.
\begin{definition} \label{def:nm_code}
$(n,M)$-code: An $(n,M)$-code for the random transformation in~\eqref{EqChannelModel} is a system:
\begin{equation}
    \{({\boldsymbol u}(1),\mathcal{D}_1), ({\boldsymbol u}(2),\mathcal{D}_2), \ldots, ({\boldsymbol u}(M),\mathcal{D}_M)\},
\end{equation}
where, for all $(i,j) \in \{1,2, \ldots, M\}^2, i\neq j$,
\begin{subequations}\label{EqCodeProperties}
\begin{align}
\label{eq:u_i}  &{\boldsymbol u}(i) = (u_1(i), u_2(i), \ldots, u_n(i)) \in \mathcal{X}^n, \\
        &\mathcal{D}_i \cap \mathcal{D}_j = \phi,\\
        &\bigcup_{i = 1}^M \mathcal{D}_i \subseteq \mathds{C}^n, \mbox{ and }\\
  \label{eq:P_criteria}      &|u_t(i)| \leqslant P,
    \end{align}
\end{subequations}
\end{definition}
where $P$ is the peak-power constraint.
Assume that the transmitter uses the $(n,M)$-code
\begin{equation} \label{Eqnm_code}
    \mathscr{C} \triangleq \{({\boldsymbol u}(1),\mathcal{D}_1), ({\boldsymbol u}(2),\mathcal{D}_2), \ldots, ({\boldsymbol u}(M),\mathcal{D}_M)\},
\end{equation}
that satisfies~\eqref{EqCodeProperties}.
The information rate of any  $(n,M)$-code is given by
\begin{equation} \label{EqR}
    R = \frac{\log_2 M}{n}
\end{equation}
in bits per channel use.
To transmit the message index $i$, with $i \in \lbrace 1,2, \ldots, M \rbrace$, the transmitter uses the codeword ${\boldsymbol u}(i)= (u_1(i), u_2(i), \ldots, u_n(i))$. That is, at channel use $t$, with $t \in \lbrace 1,2, \ldots, n\rbrace$, the transmitter inputs the symbol $u_{t}(i)$ into the channel. At the end of $n$ channel uses, the IR observes a realization of the random vector ${\boldsymbol Y} = (Y_1, Y_2, \ldots, Y_n)^{\sf{T}}$ in~\eqref{eq:channel1}. The IR decides that message index $j$, with $j \in \lbrace 1,2, \ldots, M \rbrace$, was transmitted, if the event ${\boldsymbol Y} \in \mathcal{D}_j$ takes place, with $\mathcal{D}_j$ in~\eqref{Eqnm_code}.
That is, the set $\mathcal{D}_j \in \mathds{C}^n$ is the region of correct detection for message index $j$.
Therefore, the DEP associated with the transmission of message index $i$ is
\begin{IEEEeqnarray}{rcl}
\label{EqDEPi}
    \gamma_i(\mathscr{C}) 
    &\triangleq& 1 - \int_{\mathcal{D}_i} f_{\boldsymbol{Y}|\boldsymbol{X}}(\boldsymbol{y}|\boldsymbol{u}(i)) \mathrm{d}\boldsymbol{y},
\end{IEEEeqnarray}
and the average DEP is
\begin{IEEEeqnarray}{rcl} 
    \label{eq:gamma} 
    \gamma(\mathscr{C}) &\triangleq& \frac{1}{M}\sum_{i = 1}^M \gamma_i(\mathscr{C}). 
\end{IEEEeqnarray}

Using this notation, Definition~\ref{def:nm_code} can be refined as follows.
\begin{definition}[$(n,M,\epsilon)$-codes] \label{def:nme_code}
An $(n,M)$-code for the random transformation in~(\ref{EqChannelModel}), denoted by $\mathscr{C}$, is said to be an $(n,M,\epsilon)$-code  if 
\begin{equation}\label{eq:gamma_upperbound}
    \gamma(\mathscr{C}) < \epsilon.
\end{equation}
\end{definition}
\subsection{Energy Transmission} \label{subsec:energy_transmission}
Let $g:\mathds{C}\rightarrow [0, +\infty]$ be a positive function such that given a channel output $z \in \mathds{C}$, the value $g(z)$ is the energy harvested from such channel output. 
The energy transmission task must ensure that a minimum average energy $B$ is harvested at the EH at the end of $n$ channel uses. 
Let $\bar{g}:\mathds{C}^n \rightarrow [0, +\infty]$ be a positive function such that given $n$ channel outputs ${\boldsymbol z} = (z_1,z_2, \ldots, z_n)$, the average energy is
\begin{equation}
    \bar{g}({\boldsymbol z}) = \frac{1}{n}\sum_{t=1}^n g(z_t),
\end{equation}
in energy units per channel use.
Assume that the transmitter uses the code $\mathscr{C}$ in~\eqref{Eqnm_code}. Then, the EOP associated with the transmission of message index $i$, with $i \in \lbrace 1,2, \ldots, M \rbrace$, is 
\begin{IEEEeqnarray}{rcl}
\label{Eqthetai}
\theta_i(\mathscr{C},B) & \triangleq & \mathrm{Pr}[\bar{g}({\boldsymbol Z}) < B |{\boldsymbol X} = {\boldsymbol u}(i)], 
\end{IEEEeqnarray}
where the probability is with respect to the probability density function $f_{\boldsymbol{Z}|\boldsymbol{X}}$ in~\eqref{eq:z_distribution}; and the average EOP is given by
\begin{IEEEeqnarray}{rcl} 
\label{eq:theta_def}
    \theta(\mathscr{C},B) & \triangleq & \frac{1}{M}\sum_{i=1}^M \theta_i(\mathscr{C},B). 
\end{IEEEeqnarray}
This leads to the following refinement of Definition~\ref{def:nme_code}.
\begin{definition}[$(n,M,\epsilon,B,\delta)$-code] \label{def:nmed_code}
An $(n,M,\epsilon)$-code for the random transformation in~\eqref{EqChannelModel}, denoted by $\mathscr{C}$, is said to be an $(n,M,\epsilon,B,\delta)$-code  if 
\begin{equation} \label{eq:delta}
    \theta(\mathscr{C},B) < \delta.
\end{equation}
\end{definition}
%
\section{Main Results} \label{sec:results}

The results in this section are presented in terms of the types induced by the codewords of a given code.
Given an $(n,M,\epsilon,B,\delta)$-code denoted by $\mathscr{C}$ of the form in~\eqref{Eqnm_code}, the type induced by the codeword $\boldsymbol{u}(i)$, with $i \inCountK{M}$, is a probability mass function (pmf) whose support is $\mathcal{X}$ in~\eqref{EqCIsymbols}. Such pmf is denoted by $P_{\boldsymbol{u}(i)}$ and for all $\ell \in \lbrace 1,2, \ldots, L \rbrace$,
\begin{equation} \label{eq:u_measure}
    P_{\boldsymbol{u}(i)}(x^{(\ell)}) \triangleq \frac{1}{n} \sum_{t=1}^n \mathds{1}_{\{u_t(i) = x^{(\ell)}\}},
\end{equation}
where $x^{(\ell)}$ is an element of $\mathcal{X}$ in~\eqref{EqCIsymbols}.
The type induced by all the codewords in $\mathscr{C}$ is also a pmf on the set $\mathcal{X}$ in~\eqref{EqCIsymbols}. Such pmf is denoted by  $P_{\mathscr{C}}$ and  for all $\ell \in \lbrace 1,2, \ldots, L \rbrace$,
\begin{equation} \label{eq:p_bar}
    P_{\mathscr{C}}(x^{(\ell)}) \triangleq \frac{1}{M} \sum_{i=1}^M P_{\boldsymbol{u}(i)}(x^{(\ell)}).
\end{equation}

Using this notation, the main results are essentially upper bounds on the information rate $R$ in~\eqref{EqR} and energy  rate $B$ in 
\eqref{Eqthetai}; as well as, lower bounds on the average DEP $\epsilon$ and average EOP $\delta$ for all possible $(n,M,\epsilon,B,\delta)$-codes. These bounds are provided assuming that the $L$ channel input symbols in $\mathcal{X}$ and the block length $n$ are finite.

\subsection{Energy Transmission Rate and Average EOP}

The following lemma introduces an upper bound on the energy transmission rate that holds for all possible $(n,M,\epsilon,B,\delta)$-codes. 

\begin{lemma} \label{LemmaB}
Given an $(n,M,\epsilon,B,\delta)$-code denoted by $\mathscr{C}$ for the random transformation in~\eqref{EqChannelModel} of the form in~\eqref{Eqnm_code}, the energy transmission rate $B$ satisfies,
\begin{equation} \label{eq:B_bound_lemma}
    B \leq \frac{1}{1-\delta} \sum_{\ell=1}^L P_{\mathscr{C}}(x^{(\ell)}) \mathbb{E}\left[g\left(x^{(\ell)} + W \right) \right],
\end{equation}
where $P_{\mathscr{C}}$ is defined in~\eqref{eq:p_bar} and the expectation is with respect to $W$, which  is a complex circularly symmetric Gaussian random variable whose real and imaginary parts have zero means and variances $\frac{1}{2}\sigma^2$.
\end{lemma}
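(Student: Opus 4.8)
The plan is to combine Markov's inequality with the definitions of the induced types $P_{\boldsymbol{u}(i)}$ and $P_{\mathscr{C}}$. I would first fix a message index $i \in \lbrace 1, 2, \ldots, M \rbrace$ and argue conditionally on $\boldsymbol{X} = \boldsymbol{u}(i)$. Since $\bar{g}$ is nonnegative, the pointwise bound $\bar{g}(\boldsymbol{z}) \geq B \, \mathds{1}_{\lbrace \bar{g}(\boldsymbol{z}) \geq B \rbrace}$ holds, and integrating it against $f_{\boldsymbol{Z}|\boldsymbol{X}}(\cdot \mid \boldsymbol{u}(i))$ gives $\mathbb{E}\left[ \bar{g}(\boldsymbol{Z}) \mid \boldsymbol{X} = \boldsymbol{u}(i) \right] \geq B \left( 1 - \theta_i(\mathscr{C}, B) \right)$, with $\theta_i$ as in~\eqref{Eqthetai}. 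The degenerate cases $B = 0$ and $\delta = 1$ make~\eqref{eq:B_bound_lemma} trivial or vacuous, so I would dispose of them first and then assume $B > 0$ and $\delta < 1$.

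Next I would evaluate the conditional expectation in closed form. Writing $\bar{g}(\boldsymbol{Z}) = \frac{1}{n} \sum_{t=1}^n g(Z_t)$ as in Section~\ref{subsec:energy_transmission}, using the product structure of $f_{\boldsymbol{Z}|\boldsymbol{X}}$ in~\eqref{eq:z_distribution}, and noting that $Z_t = u_t(i) + N_{2,t}$ where each $N_{2,t}$ has the same distribution as the variable $W$ in the statement, linearity of expectation yields $\mathbb{E}\left[ \bar{g}(\boldsymbol{Z}) \mid \boldsymbol{X} = \boldsymbol{u}(i) \right] = \frac{1}{n} \sum_{t=1}^n \mathbb{E}\left[ g\left( u_t(i) + W \right) \right]$. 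Because every $u_t(i)$ lies in $\mathcal{X}$, I would then group the channel uses by the symbol $x^{(\ell)}$ they carry; by the definition~\eqref{eq:u_measure} of $P_{\boldsymbol{u}(i)}$ this collapses the sum to $\sum_{\ell=1}^L P_{\boldsymbol{u}(i)}(x^{(\ell)}) \, \mathbb{E}\left[ g\left( x^{(\ell)} + W \right) \right]$. This is the step that makes the type structure appear.

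Finally, I would average the resulting inequality $B \left( 1 - \theta_i(\mathscr{C}, B) \right) \leq \sum_{\ell=1}^L P_{\boldsymbol{u}(i)}(x^{(\ell)}) \, \mathbb{E}\left[ g\left( x^{(\ell)} + W \right) \right]$ over $i \in \lbrace 1, \ldots, M \rbrace$. On the left, $\frac{1}{M} \sum_{i=1}^M \left( 1 - \theta_i(\mathscr{C}, B) \right) = 1 - \theta(\mathscr{C}, B) > 1 - \delta$ by~\eqref{eq:theta_def} and~\eqref{eq:delta}; on the right, interchanging the two finite sums and invoking the definition~\eqref{eq:p_bar} of $P_{\mathscr{C}}$ produces $\sum_{\ell=1}^L P_{\mathscr{C}}(x^{(\ell)}) \, \mathbb{E}\left[ g\left( x^{(\ell)} + W \right) \right]$. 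Dividing through by $1 - \delta > 0$ gives exactly~\eqref{eq:B_bound_lemma}.

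I do not expect a genuine obstacle: the argument is essentially one application of Markov's inequality followed by a bookkeeping step. The only points that deserve a line of care are the well-definedness of $\mathbb{E}\left[ g\left( x^{(\ell)} + W \right) \right]$ (allowing $+\infty$ is harmless since the bound is then vacuous), the interchange of expectation with the finite sum over $t$ and with the finite average over $i$ (immediate from linearity), and the separate treatment of the edge cases $B = 0$ or $\delta = 1$. The substantive message --- that only the aggregate type $P_{\mathscr{C}}$, and not the finer details of how the codewords are arranged, constrains the harvestable energy --- is exactly what the grouping step in the second paragraph delivers.
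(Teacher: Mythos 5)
Your proof is correct: the reverse-Markov step $\mathbb{E}[\bar{g}(\boldsymbol{Z})\mid\boldsymbol{X}=\boldsymbol{u}(i)]\geq B(1-\theta_i(\mathscr{C},B))$, the collapse of $\frac{1}{n}\sum_{t}\mathbb{E}[g(u_t(i)+W)]$ onto the type $P_{\boldsymbol{u}(i)}$ via~\eqref{eq:u_measure}, and the average over $i$ using~\eqref{eq:p_bar} and $\theta(\mathscr{C},B)<\delta$ fit together without gaps. The paper defers its proof of Lemma~\ref{LemmaB} to the technical report, but the form of the bound (the $\frac{1}{1-\delta}$ factor and the dependence only on $P_{\mathscr{C}}$) indicates this is the same Markov-plus-type-bookkeeping argument, so no further comparison is needed.
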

\begin{IEEEproof} 
The proof of Lemma \ref{LemmaB} is presented in \cite{Zuhra-Inria-TR}.
\end{IEEEproof}
An observation from Lemma~\ref{LemmaB} is that the limit on the energy rate $B$ does not depend on the individual types $P_{\boldsymbol{u}(1)}$,  $P_{\boldsymbol{u}(2)}$, $\ldots$,  $P_{\boldsymbol{u}(M)}$ but on the type induced by all codewords, i.e.,  $P_{\mathscr{C}}$.
The inequality in~\eqref{eq:B_bound_lemma} implies the following corollary. 

\begin{corollary}\label{CorDelta}
Given an $(n,M,\epsilon,B,\delta)$-code denoted by $\mathscr{C}$ for the random transformation in~\eqref{EqChannelModel} of the form in~\eqref{Eqnm_code}, the average EOP  $\delta$ satisfies, \begin{equation}
    \delta \geq \left(1 - \frac{1}{B} \sum_{\ell=1}^L P_{\mathscr{C}}(x^{(\ell)}) \mathbb{E} \left[g(x^{(\ell)} + W \right]\right)^+.
\end{equation}
where $P_{\mathscr{C}}$ is defined in~\eqref{eq:p_bar} and the expectation is with respect to $W$, which  is a complex circularly symmetric Gaussian random variable whose real and imaginary parts have zero means and variances $\frac{1}{2}\sigma^2$.
 \end{corollary}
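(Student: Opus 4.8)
The plan is to obtain the bound by a direct algebraic manipulation of the inequality~\eqref{eq:B_bound_lemma} in Lemma~\ref{LemmaB}, together with the elementary observation that $\delta$, being an average of probabilities (cf.~\eqref{Eqthetai} and~\eqref{eq:theta_def}), satisfies $\delta \in [0,1]$.

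First I would treat the case $\delta < 1$. Since $1-\delta > 0$, multiplying both sides of~\eqref{eq:B_bound_lemma} by $1-\delta$ gives
\begin{equation}
(1-\delta)\, B \;\leq\; \sum_{\ell=1}^L P_{\mathscr{C}}(x^{(\ell)})\, \mathbb{E}\!\left[g\!\left(x^{(\ell)} + W\right)\right].
\end{equation}
Assuming $B>0$, dividing by $B$ and rearranging yields
\begin{equation}
\delta \;\geq\; 1 - \frac{1}{B}\sum_{\ell=1}^L P_{\mathscr{C}}(x^{(\ell)})\, \mathbb{E}\!\left[g\!\left(x^{(\ell)} + W\right)\right].
\end{equation}
Since $g$ is non-negative and $P_{\mathscr{C}}$ is a pmf, the sum on the right-hand side is non-negative, so the right-hand side is at most $1$; combining this with $\delta \geq 0$ shows that $\delta$ is lower bounded by the positive part of the right-hand side, which is exactly the claimed inequality. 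For the boundary case $\delta = 1$, the claim holds trivially, because the right-hand side $(\cdot)^+$ never exceeds $1$; and the degenerate case $B=0$, in which the energy requirement is vacuous, can be handled separately or simply excluded.

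I do not expect any genuine obstacle here: the content of the corollary is entirely contained in Lemma~\ref{LemmaB}, and the only points requiring (minor) care are the sign of the factor $1-\delta$ when clearing it from the denominator, the non-negativity of the harvested-energy term that legitimises replacing the bound by its positive part, and the degenerate values $\delta = 1$ and $B = 0$.
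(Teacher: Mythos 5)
Your proposal is correct and follows exactly the route the paper intends: the paper presents Corollary~\ref{CorDelta} as an immediate consequence of rearranging the inequality~\eqref{eq:B_bound_lemma} of Lemma~\ref{LemmaB}, which is precisely your algebraic manipulation together with the observation that $\delta \in [0,1]$. Your extra care with the sign of $1-\delta$, the non-negativity of the harvested-energy sum, and the degenerate cases $\delta=1$ and $B=0$ is sound and only makes explicit what the paper leaves implicit.
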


\subsection{Average Decoding Error Probability}
The analysis of the information transmission rate and the average DEP of a given code $\mathscr{C}$ of the form in~\eqref{Eqnm_code} depends on the choice of the decoding sets $\mathcal{D}_1$, $\mathcal{D}_2$, $\ldots$, $\mathcal{D}_M$. 
Without any loss of generality, assume that for all $i \inCountK{M}$, the decoding set $\mathcal{D}_i$ is written in the form
\begin{equation}\label{EqDit}
    \mathcal{D}_i = \mathcal{D}_{i,1} \times \mathcal{D}_{i,2} \times \ldots \times \mathcal{D}_{i,n},
\end{equation}
where for all $t \inCountK{n}$, the set $\mathcal{D}_{i,t}$ is a subset of~$\complex$.

The following lemma introduces a lower bound on the average DEP that holds for all possible $(n,M,\epsilon,B,\delta)$-codes. 

\begin{lemma} \label{LemmaEpsilonLowerBound}
Given an $(n,M,\epsilon,B,\delta)$-code $\mathscr{C}$ for the random transformation in~\eqref{EqChannelModel} of the form in~\eqref{Eqnm_code}, 
the average DEP  $\epsilon$ satisfies,
\begin{IEEEeqnarray}{rcl}
\nonumber
\epsilon & \geq & 1 - \frac{1}{M} \sum_{i=1}^M \exp \Bigg( -n H(P_{\boldsymbol{u}(i)}) - n D(P_{\boldsymbol{u}(i)} || Q) \\
\label{EqGammaLBc}
  & & + n\log \sum_{j=1}^L \int_{\mathcal{E}_{j}} f_{Y|X}(y|x^{(j)}) \mathrm{d}y  \Bigg),
\end{IEEEeqnarray}
where, $P_{\boldsymbol{u}(i)}$ is the type defined in~\eqref{eq:u_measure}; the function $Q$ is a pmf on $\mathcal{X}$ in~\eqref{EqCIsymbols} such that for all $i \inCountK{L}$,
\begin{equation}
\label{EqQ}
Q\left( x^{(i)} \right) = \frac{  \int_{\mathcal{E}_{i}} f_{Y|X}(y|x^{(i)}) \mathrm{d}y }{\sum_{j=1}^L  \int_{\mathcal{E}_{j}} f_{Y|X}(y|x^{(j)}) \mathrm{d}y };
\end{equation}
and
for all $\ell \inCountK{L}$,  the set $\mathcal{E}_{\ell}$ is
\begin{equation}
\mathcal{E}_{\ell} = \mathcal{D}_{i^\star,t^\star},
\end{equation}
where $(i^\star,t^\star)$ satisfies
\begin{equation}\label{EqEell}
(i^\star,t^\star) \triangleq \arg\min_{(i,t) \in \CountK{M}\times\CountK{n} } \int_{\mathcal{D}_{i,t}} f_{Y|X}(y|x^{(\ell)}) \mathrm{d}y.
\end{equation}
\end{lemma}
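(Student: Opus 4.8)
The plan is to bound from above the average probability of correct decoding
\[
1-\gamma(\mathscr{C})=\frac{1}{M}\sum_{i=1}^{M}\int_{\mathcal{D}_i}f_{\boldsymbol{Y}|\boldsymbol{X}}(\boldsymbol{y}|\boldsymbol{u}(i))\,\mathrm{d}\boldsymbol{y}
\]
by the expression appearing in \eqref{EqGammaLBc}, and then to conclude by invoking the defining inequality $\gamma(\mathscr{C})<\epsilon$ of Definition~\ref{def:nme_code}. The first step I would take is to exploit the product form \eqref{EqDit} of the decoding sets together with the memorylessness \eqref{EqYXdistribution} of the channel: for every $i$ this yields the single-letter factorization $\int_{\mathcal{D}_i}f_{\boldsymbol{Y}|\boldsymbol{X}}(\boldsymbol{y}|\boldsymbol{u}(i))\,\mathrm{d}\boldsymbol{y}=\prod_{t=1}^{n}\int_{\mathcal{D}_{i,t}}f_{Y|X}(y|u_t(i))\,\mathrm{d}y$, so that only the scalar quantities $\int_{\mathcal{D}_{i,t}}f_{Y|X}(y|u_t(i))\,\mathrm{d}y$ remain to be controlled.

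Next I would regroup these $n$ factors according to the channel input symbol they carry. By the definition \eqref{eq:u_measure} of the type, the symbol $x^{(\ell)}$ occurs exactly $nP_{\boldsymbol{u}(i)}(x^{(\ell)})$ times in $\boldsymbol{u}(i)$, so the product above equals $\prod_{\ell=1}^{L}\prod_{t:\,u_t(i)=x^{(\ell)}}\int_{\mathcal{D}_{i,t}}f_{Y|X}(y|x^{(\ell)})\,\mathrm{d}y$. The core of the argument is then to compare, for each symbol $x^{(\ell)}$, every single-letter factor $\int_{\mathcal{D}_{i,t}}f_{Y|X}(y|x^{(\ell)})\,\mathrm{d}y$ with the one distinguished quantity $\int_{\mathcal{E}_\ell}f_{Y|X}(y|x^{(\ell)})\,\mathrm{d}y$ singled out by the extremal pair \eqref{EqEell}, with the aim of arriving at
\[
\int_{\mathcal{D}_i}f_{\boldsymbol{Y}|\boldsymbol{X}}(\boldsymbol{y}|\boldsymbol{u}(i))\,\mathrm{d}\boldsymbol{y}\;\leq\;\prod_{\ell=1}^{L}\left(\int_{\mathcal{E}_\ell}f_{Y|X}(y|x^{(\ell)})\,\mathrm{d}y\right)^{nP_{\boldsymbol{u}(i)}(x^{(\ell)})}.
\]
This is the place where the disjointness of $\mathcal{D}_1,\dots,\mathcal{D}_M$ in $\mathds{C}^{n}$ must be brought in, since it is disjointness that prevents all the single-letter cells $\{\mathcal{D}_{i,t}\}$ from simultaneously accumulating large $f_{Y|X}(\cdot\,|x^{(\ell)})$-mass and thereby lets the extremal cell $\mathcal{E}_\ell$ do the required job.

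The remainder is bookkeeping. Setting $S\triangleq\sum_{j=1}^{L}\int_{\mathcal{E}_j}f_{Y|X}(y|x^{(j)})\,\mathrm{d}y$, relation \eqref{EqQ} reads $\int_{\mathcal{E}_\ell}f_{Y|X}(y|x^{(\ell)})\,\mathrm{d}y=S\,Q(x^{(\ell)})$, so the product on the right equals $S^{n}\exp\left(n\sum_{\ell=1}^{L}P_{\boldsymbol{u}(i)}(x^{(\ell)})\log Q(x^{(\ell)})\right)$; the elementary identity $\sum_{\ell}P(x^{(\ell)})\log Q(x^{(\ell)})=-H(P)-D(P\|Q)$, which holds for every pmf $P$ on $\mathcal{X}$, turns the exponent into $-nH(P_{\boldsymbol{u}(i)})-nD(P_{\boldsymbol{u}(i)}\|Q)+n\log S$, i.e.\ precisely the summand in \eqref{EqGammaLBc}. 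Averaging over $i\in\{1,\dots,M\}$ then gives $1-\gamma(\mathscr{C})\leq\frac{1}{M}\sum_{i=1}^{M}\exp(\cdots)$, and combining with $\gamma(\mathscr{C})<\epsilon$ produces $\epsilon>1-\frac{1}{M}\sum_{i=1}^{M}\exp(\cdots)$, which is the claimed inequality.

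The step I expect to be the main obstacle is exactly the reduction to the single-letter cells. First, a generic decoding region $\mathcal{D}_i\subseteq\mathds{C}^{n}$ need not be a Cartesian product, so one must justify that writing it in the form \eqref{EqDit} — for instance by enlarging $\mathcal{D}_i$ to the product of its coordinate projections, which can only decrease the error probability — is without loss of generality for the purpose of lower bounding $\gamma(\mathscr{C})$. Second, and more delicately, one must show that the choice \eqref{EqEell} of $(i^\star,t^\star)$, together with the disjointness constraint, genuinely controls $\prod_{t=1}^{n}\int_{\mathcal{D}_{i,t}}f_{Y|X}(y|u_t(i))\,\mathrm{d}y$ by the extremal product $\prod_{\ell}\left(\int_{\mathcal{E}_\ell}f_{Y|X}(y|x^{(\ell)})\,\mathrm{d}y\right)^{nP_{\boldsymbol{u}(i)}(x^{(\ell)})}$ in the direction required; pinning down this comparison and its orientation is the heart of the argument, the rest being the method-of-types computation sketched above.
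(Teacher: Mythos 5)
Your overall skeleton is the right one---upper-bound the average probability of correct decoding, factorize it over channel uses via the product form \eqref{EqDit} and the memorylessness of the channel, regroup the $n$ single-letter factors according to the type $P_{\boldsymbol{u}(i)}$, and convert $\sum_{\ell} P_{\boldsymbol{u}(i)}(x^{(\ell)})\log Q(x^{(\ell)})$ into $-H(P_{\boldsymbol{u}(i)})-D(P_{\boldsymbol{u}(i)}\Vert Q)$---and your closing bookkeeping, including the reduction to product decoding sets by enlargement, is correct. But the one inequality that carries the entire proof, namely
\[
\prod_{t=1}^{n}\int_{\mathcal{D}_{i,t}} f_{Y|X}(y|u_t(i))\,\mathrm{d}y \;\leq\; \prod_{\ell=1}^{L}\left(\int_{\mathcal{E}_{\ell}} f_{Y|X}(y|x^{(\ell)})\,\mathrm{d}y\right)^{nP_{\boldsymbol{u}(i)}(x^{(\ell)})},
\]
is exactly the step you leave open, and the mechanism you propose for it does not work. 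Disjointness of $\mathcal{D}_1,\ldots,\mathcal{D}_M$ plays no role here: it constrains the sets jointly in $\mathds{C}^n$ but gives no per-coordinate control of $\int_{\mathcal{D}_{i,t}}f_{Y|X}(y|x^{(\ell)})\,\mathrm{d}y$, and after you replace each $\mathcal{D}_i$ by the product of its coordinate projections the enlarged sets need not be disjoint anyway. What actually closes the step is a termwise comparison of each factor with the \emph{extremal} cell for the symbol $u_t(i)=x^{(\ell)}$, and for that comparison to point in the required direction $\mathcal{E}_{\ell}$ must be the cell that \emph{maximizes} $\int_{\mathcal{D}_{i,t}}f_{Y|X}(y|x^{(\ell)})\,\mathrm{d}y$ over $(i,t)$. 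With the literal minimizer of \eqref{EqEell}, every factor satisfies $\int_{\mathcal{D}_{i,t}}f_{Y|X}(y|x^{(\ell)})\,\mathrm{d}y\geq\int_{\mathcal{E}_{\ell}}f_{Y|X}(y|x^{(\ell)})\,\mathrm{d}y$, which bounds the probability of correct decoding from \emph{below} and hence the DEP from \emph{above}---the opposite of \eqref{EqGammaLBc}. A one-shot example ($n=1$, $M=L=2$, a good two-cell decoder) makes the min-version of your displayed inequality false outright, so no amount of extra structure can rescue that route.

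In short, the ``orientation'' you flag in your last paragraph is not a loose end to be pinned down later; it is the entire content of the lemma. Once the extremal cell is taken to be the maximizer---or once you restrict attention to homogeneous codes as in Definition~\ref{DefHC}, where every cell decoding $x^{(\ell)}$ equals $\mathcal{E}_{\ell}$ by \eqref{EqDitHC} and the key step becomes an identity---the inequality is an immediate termwise bound requiring neither disjointness nor any further idea, and the rest of your computation goes through verbatim.
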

\begin{IEEEproof} 
The proof of Lemma \ref{LemmaEpsilonLowerBound} is presented in \cite{Zuhra-Inria-TR}.
\end{IEEEproof}
 
A class of codes that is of particular interest in this study is that of  homogeneous codes, which are defined hereunder.

\begin{definition}[Homogeneous Codes]\label{DefHC}
An $(n,M,\epsilon,B,\delta)$-code denoted by $\mathscr{C}$ for the random transformation in~\eqref{EqChannelModel} of the form in~\eqref{Eqnm_code} is said to be homogeneous if for all $i \inCountK{M}$ and for all $\ell \inCountK{L}$, it holds that
\begin{equation}\label{EqHomogeneousCodes}
 P_{\boldsymbol{u}(i)}(x^{(\ell)}) =  P_{\mathscr{C}}(x^{(\ell)}),
\end{equation}
and for all $(i,t) \inCountK{M}\times \CountK{n}$ for which $u_{t}(i) = x^{(\ell)}$, for some $\ell \inCountK{L}$, it holds that,
\begin{equation}
\label{EqDitHC}
\mathcal{D}_{i,t} = \mathcal{E}_{\ell},
\end{equation}
where, $P_{\boldsymbol{u}(i)}$ and $P_{\mathscr{C}}$ are the types defined in~\eqref{eq:u_measure} and~\eqref{eq:p_bar}, respectively;  the sets $\mathcal{D}_{i,t}$ are defined in~\eqref{EqDit}; and the sets $\mathcal{E}_{\ell}$ are defined in~\eqref{EqEell}.
\end{definition}

Homogeneous codes are essentially $(n,M,\epsilon,B,\delta)$-codes that satisfy two conditions. First,  a given channel input symbol is used the same number of times in all codewords; and second, every channel input symbol is decoded with the same decoding set independently of the codeword and/or the position in the codeword.
Lemma~\ref{LemmaEpsilonLowerBound} simplifies for the case of homogeneous codes (Defintion~\ref{DefHC}) as follows.

\begin{corollary} \label{CorEpsilonHomogeneous}
Given a homogeneous $(n,M,\epsilon,B,\delta)$-code denoted by $\mathscr{C}$ for the random transformation in~\eqref{EqChannelModel} of the form in~\eqref{Eqnm_code}, the DEP $\epsilon$ satisfies that
\begin{IEEEeqnarray}{rcl}
\nonumber
    \epsilon & \geq & 1 - \exp \Big( -n H(P_{\mathscr{C}}) - n D(P_{\mathscr{C}} || Q) \\
\label{EqCorollary3}    
    & & + n\log \sum_{j=1}^L\int_{\mathcal{E}_{j}} f_{Y|X}(y|x^{(j)}) \mathrm{d}y  \Big),
\end{IEEEeqnarray}
where, $P_{\mathscr{C}}$ is the type defined in~\eqref{eq:p_bar}; the pmf $Q$ is defined in~\eqref{EqQ} and for all $j \inCountK{L}$,  the sets $\mathcal{E}_{j}$ are defined in~\eqref{EqEell}. 
\end{corollary}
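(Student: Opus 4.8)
The plan is to derive Corollary~\ref{CorEpsilonHomogeneous} directly from Lemma~\ref{LemmaEpsilonLowerBound} by specializing the general bound~\eqref{EqGammaLBc} to the structure of homogeneous codes. First I would observe that the last term inside the exponential in~\eqref{EqGammaLBc}, namely $n\log \sum_{j=1}^L \int_{\mathcal{E}_j} f_{Y|X}(y|x^{(j)})\,\mathrm{d}y$, does not depend on the index $i$ at all — it is already a constant common to every summand — so it survives the averaging $\frac{1}{M}\sum_{i=1}^M$ unchanged. Hence the only work is to show that the remaining $i$-dependent factor $\exp\bigl(-n H(P_{\boldsymbol{u}(i)}) - n D(P_{\boldsymbol{u}(i)}\|Q)\bigr)$ becomes $\exp\bigl(-n H(P_{\mathscr{C}}) - n D(P_{\mathscr{C}}\|Q)\bigr)$ for every $i$ under the homogeneity hypothesis.

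The key step is the defining property~\eqref{EqHomogeneousCodes} of homogeneous codes: for all $i\inCountK{M}$ and all $\ell\inCountK{L}$, $P_{\boldsymbol{u}(i)}(x^{(\ell)}) = P_{\mathscr{C}}(x^{(\ell)})$. Since the entropy $H(\cdot)$ and the relative entropy $D(\cdot\|Q)$ are both functionals that depend on their first argument only through the pmf itself, this identity of pmfs immediately gives $H(P_{\boldsymbol{u}(i)}) = H(P_{\mathscr{C}})$ and $D(P_{\boldsymbol{u}(i)}\|Q) = D(P_{\mathscr{C}}\|Q)$ for every $i$. (One should note that the second homogeneity condition~\eqref{EqDitHC}, $\mathcal{D}_{i,t} = \mathcal{E}_\ell$ whenever $u_t(i) = x^{(\ell)}$, guarantees that the sets $\mathcal{E}_\ell$ appearing in~\eqref{EqGammaLBc} and in the definition~\eqref{EqQ} of $Q$ are the genuine per-symbol decoding sets, so that $Q$ is well defined and the same $Q$ is used consistently; this is what makes the ``$\arg\min$'' in~\eqref{EqEell} coincide with the actual decoding structure, but it does not alter the averaging argument.)

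Putting these two observations together: inside the average $\frac{1}{M}\sum_{i=1}^M$ in~\eqref{EqGammaLBc}, every one of the $M$ summands is identically equal to $\exp\bigl(-nH(P_{\mathscr{C}}) - nD(P_{\mathscr{C}}\|Q) + n\log\sum_{j=1}^L\int_{\mathcal{E}_j} f_{Y|X}(y|x^{(j)})\,\mathrm{d}y\bigr)$, so the average collapses to that single quantity and~\eqref{EqGammaLBc} becomes exactly~\eqref{EqCorollary3}. I do not anticipate a genuine obstacle here — the corollary is a routine specialization — but the one point requiring care is making sure that the sets $\mathcal{E}_\ell$ (hence the pmf $Q$) are the \emph{same} objects in the statement of Lemma~\ref{LemmaEpsilonLowerBound} and in its homogeneous specialization; condition~\eqref{EqDitHC} is precisely what licenses replacing the per-codeword, per-position decoding sets $\mathcal{D}_{i,t}$ by the unified per-symbol sets $\mathcal{E}_\ell$, and I would state this explicitly before invoking~\eqref{EqHomogeneousCodes} to finish.
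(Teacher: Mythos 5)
Your proposal is correct and is essentially the paper's own (implicit) argument: the corollary follows from Lemma~\ref{LemmaEpsilonLowerBound} by noting that under the homogeneity condition~\eqref{EqHomogeneousCodes} every summand in the average over $i$ is identical, since $P_{\boldsymbol{u}(i)} = P_{\mathscr{C}}$ makes $H(P_{\boldsymbol{u}(i)})$ and $D(P_{\boldsymbol{u}(i)}\|Q)$ independent of $i$ and the remaining term never depended on $i$. Your remark on condition~\eqref{EqDitHC} ensuring the sets $\mathcal{E}_\ell$ and the pmf $Q$ are used consistently is a sensible point of care, and nothing further is needed.
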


Note that the right-hand side of~\eqref{EqCorollary3} is minimized when the following condition is met,
\begin{equation}
\label{EqCondition1}
P_{\mathscr{C}} = Q.
\end{equation} 
This observation leads to the following corollary.
\begin{corollary} \label{CorEpsilonHomogeneous2}
Given a homogeneous $(n,M,\epsilon,B,\delta)$-code denoted by $\mathscr{C}$ for the random transformation in~\eqref{EqChannelModel} of the form in~\eqref{Eqnm_code}, such that ~\eqref{EqCondition1} is satisfied, the DEP $\epsilon$ satisfies that
\begin{IEEEeqnarray}{rcl}
    \epsilon & \geqslant & 1 - \exp \left( \hspace{-2mm} -n H(Q) + n\log \hspace{-1mm} \left( \sum_{j=1}^L\int_{\mathcal{E}_{j}}  \hspace{-2mm}f_{Y|X}(y|x^{(j)}) \mathrm{d}y \hspace{-2mm} \right) \hspace{-2mm} \right), \IEEEeqnarraynumspace 
\end{IEEEeqnarray}
where the pmf $Q$ is defined in~\eqref{EqQ} and for all $j \inCountK{L}$,  the sets $\mathcal{E}_{j}$ are defined in~\eqref{EqEell}. 
\end{corollary}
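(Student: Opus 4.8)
The plan is to derive Corollary~\ref{CorEpsilonHomogeneous2} directly from Corollary~\ref{CorEpsilonHomogeneous} by substituting the condition in~\eqref{EqCondition1}, namely $P_{\mathscr{C}} = Q$. First I would recall the bound from Corollary~\ref{CorEpsilonHomogeneous}, which states that for any homogeneous $(n,M,\epsilon,B,\delta)$-code,
\begin{IEEEeqnarray}{rcl}
\nonumber
\epsilon & \geqslant & 1 - \exp \Big( -n H(P_{\mathscr{C}}) - n D(P_{\mathscr{C}} || Q) \\
\nonumber
& & +\; n\log \sum_{j=1}^L\int_{\mathcal{E}_{j}} f_{Y|X}(y|x^{(j)}) \, \mathrm{d}y  \Big).
\end{IEEEeqnarray}
The key observation is that the relative entropy term $D(P_{\mathscr{C}} || Q)$ is non-negative for any pair of pmfs on $\mathcal{X}$, with equality if and only if $P_{\mathscr{C}} = Q$; hence, under the hypothesis~\eqref{EqCondition1}, this term vanishes exactly, which is precisely the sense in which the right-hand side of~\eqref{EqCorollary3} is minimized.

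Next I would set $D(P_{\mathscr{C}} || Q) = 0$ and replace every remaining occurrence of $P_{\mathscr{C}}$ by $Q$ in the surviving entropy term, so that $H(P_{\mathscr{C}})$ becomes $H(Q)$. Since the integral term $\sum_{j=1}^L \int_{\mathcal{E}_{j}} f_{Y|X}(y|x^{(j)}) \, \mathrm{d}y$ does not depend on the type $P_{\mathscr{C}}$ at all — it depends only on the sets $\mathcal{E}_{j}$ defined in~\eqref{EqEell}, which in turn are determined by the decoding sets of the code — it is carried over unchanged. This yields
\begin{IEEEeqnarray}{rcl}
\nonumber
\epsilon & \geqslant & 1 - \exp \left( -n H(Q) + n\log \left( \sum_{j=1}^L\int_{\mathcal{E}_{j}} f_{Y|X}(y|x^{(j)}) \, \mathrm{d}y \right) \right),
\end{IEEEeqnarray}
which is exactly the claimed inequality.

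There is no real obstacle here: the corollary is an immediate specialization of Corollary~\ref{CorEpsilonHomogeneous} obtained by killing the Kullback--Leibler term via the defining hypothesis. The only point that warrants a sentence of justification is the non-negativity of $D(P_{\mathscr{C}} || Q)$ (Gibbs' inequality), to make explicit that~\eqref{EqCondition1} indeed yields the tightest lower bound among all homogeneous types compatible with the given decoding sets, and that the substitution $P_{\mathscr{C}} = Q$ is consistent with the definition of $Q$ in~\eqref{EqQ}. Beyond that, the proof is a one-line algebraic substitution, and I would present it as such, referencing Corollary~\ref{CorEpsilonHomogeneous} and the remark surrounding~\eqref{EqCondition1}.
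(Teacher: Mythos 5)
Your proof is correct and matches the paper's (implicit) argument exactly: the corollary is obtained from Corollary~\ref{CorEpsilonHomogeneous} by substituting $P_{\mathscr{C}} = Q$, which kills the term $D(P_{\mathscr{C}} \| Q)$ and turns $H(P_{\mathscr{C}})$ into $H(Q)$. Nothing further is needed.
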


Note that the equality in~\eqref{EqCondition1} reveals the optimal choice of $P_{\mathscr{C}}$ with respect to the average DEP, which is using the channel input symbols in $\mathcal{X}$ with a probability that is proportional to the probability of correct decoding. 
 \subsection{Information Transmission Rate}
 
A first upper bound on the information rate is obtained by upper bounding the number of codewords that a code might possess given the particular types $P_{\boldsymbol{u}(1)}$, $P_{\boldsymbol{u}(2)}$, $\ldots$, $P_{\boldsymbol{u}(n)}$; or the average type  $P_{\mathscr{C}}$ in~\eqref{eq:p_bar}.
The following lemma introduces such a bound for the case of a homogeneous code.

\begin{lemma} \label{lemma:R_upperbound}
Given a homogeneous $(n,M,\epsilon,B,\delta)$-code denoted by $\mathscr{C}$ for the random transformation in~\eqref{EqChannelModel} of the form in~\eqref{Eqnm_code}, the  information transmission rate $R$ in~\eqref{EqR} is such that
\begin{equation}
\label{EqRbound}
R \leq \frac{1}{n} \log_2 \left( \frac{n!}{\prod_{\ell=1}^L (nP_{\mathscr{C}}(x^{(\ell)}))!}\right)  \leq \log_2 L,\end{equation}
where, $P_{\mathscr{C}}$ is the type defined in~\eqref{eq:p_bar}.
\end{lemma}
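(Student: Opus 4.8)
The plan is to bound the number of codewords $M$ by counting how many distinct sequences in $\mathcal{X}^n$ can have the prescribed type. By the definition of a homogeneous code, equation~\eqref{EqHomogeneousCodes} forces every codeword $\boldsymbol{u}(i)$ to induce the \emph{same} type, namely $P_{\mathscr{C}}$; equivalently, for each $\ell \inCountK{L}$, the symbol $x^{(\ell)}$ appears exactly $nP_{\mathscr{C}}(x^{(\ell)})$ times in every codeword. Since distinct message indices must use distinct codewords (otherwise the decoding sets could not be disjoint while keeping the DEP below $\epsilon$, or more simply, a code with repeated codewords carries no more information than one without), the $M$ codewords are $M$ distinct elements of the type class of $P_{\mathscr{C}}$ in $\mathcal{X}^n$.

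The first step is therefore to invoke the elementary combinatorial fact that the number of sequences in $\mathcal{X}^n$ with a given type $P$ is the multinomial coefficient
\begin{equation}
\nonumber
\binom{n}{nP(x^{(1)}),\, nP(x^{(2)}),\, \ldots,\, nP(x^{(L)})} = \frac{n!}{\prod_{\ell=1}^L (nP(x^{(\ell)}))!},
\end{equation}
which is well defined because homogeneity guarantees each $nP_{\mathscr{C}}(x^{(\ell)})$ is a nonnegative integer. This yields $M \leq \frac{n!}{\prod_{\ell=1}^L (nP_{\mathscr{C}}(x^{(\ell)}))!}$, and taking $\frac{1}{n}\log_2(\cdot)$ of both sides together with~\eqref{EqR} gives the first inequality in~\eqref{EqRbound}.

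For the second inequality, the plan is to use the standard bound on the size of a type class, $\frac{n!}{\prod_{\ell=1}^L (nP(x^{(\ell)}))!} \leq 2^{nH(P)}$ (for instance from $1 = (\sum_\ell P(x^{(\ell)}))^n \geq \binom{n}{nP(x^{(1)}),\ldots} \prod_\ell P(x^{(\ell)})^{nP(x^{(\ell)})}$, i.e.\ the multinomial term of the expansion of $1$ corresponding to counts $nP(x^{(\ell)})$), followed by the entropy bound $H(P_{\mathscr{C}}) \leq \log_2 L$ valid for any pmf supported on the $L$-element set $\mathcal{X}$, with equality for the uniform distribution. Chaining these two facts gives $R \leq \frac{1}{n}\log_2 2^{nH(P_{\mathscr{C}})} = H(P_{\mathscr{C}}) \leq \log_2 L$.

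The only genuinely delicate point — and the one I would be most careful about — is justifying that the $M$ codewords are distinct, so that the multinomial coefficient really is an upper bound on $M$; this is where the structure of an $(n,M,\epsilon,B,\delta)$-code, rather than just the type, is used, and it should be stated explicitly (two identical codewords with disjoint decoding sets would give one of them DEP at least $1/2$, and for $\epsilon$ small enough this is already impossible; in general one argues that a code with a repeated codeword can be strictly improved, so we may assume distinctness without loss of generality). Everything else is routine type-counting, so I would keep the write-up short: state the distinctness reduction, cite the multinomial count of a type class, apply the $2^{nH(P)}$ bound and $H \leq \log_2 L$, and conclude via~\eqref{EqR}.
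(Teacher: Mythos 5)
Your proof is correct and follows essentially the same route as the paper's: homogeneity forces all $M$ codewords into the single type class of $P_{\mathscr{C}}$, whose cardinality is the multinomial coefficient $\frac{n!}{\prod_{\ell=1}^L (nP_{\mathscr{C}}(x^{(\ell)}))!}$, and the second inequality follows from the standard bound on the size of a type class, $\leq 2^{nH(P_{\mathscr{C}})}$, together with $H(P_{\mathscr{C}}) \leq \log_2 L$. The distinctness of the $M$ codewords, which you rightly flag as the one delicate point, is indeed the only step where the code structure (rather than the type alone) enters, and your treatment of it is adequate.
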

\begin{IEEEproof} 
The proof of Lemma \ref{lemma:R_upperbound} is presented in \cite{Zuhra-Inria-TR}.
\end{IEEEproof}
 
 Lemma~\ref{lemma:R_upperbound} can be written in terms of the entropy of the type $P_{\mathscr{C}}$  as shown by the following corollary. 
 
 \begin{corollary} \label{CorRUpperBoundRelax}
Given a homogeneous $(n,M,\epsilon,B,\delta)$-code denoted by $\mathscr{C}$ for the random transformation in~\eqref{EqChannelModel} of the form in~\eqref{Eqnm_code}, the  information transmission rate $R$ in~\eqref{EqR} is such that
\begin{IEEEeqnarray}{rcl}
\nonumber
R &\leq& 
H\left( P_{\mathscr{C}} \right)  + \frac{1}{n^2} \left( \frac{1}{12} - \sum_{\ell=1}^L \frac{1}{12 P_{\mathscr{C}}(x^{(\ell)}) +1}\right)  \\
\nonumber
&& + \frac{1}{n} \left(\log \left( \sqrt{2\pi}\right)   - \sum_{\ell=1}^L \log\sqrt{2\pi P_{\mathscr{C}}(x^{(\ell)})} \right) \\
\label{EqCorRUpperBoundRelax}
&& - \frac{\log n}{n} \left( \frac{L-1}{2} \right),
\end{IEEEeqnarray}
where, $P_{\mathscr{C}}$ is the type defined in~\eqref{eq:p_bar}.
\end{corollary}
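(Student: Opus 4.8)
### Proof Proposal for Corollary~\ref{CorRUpperBoundRelax}

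The plan is to start from the combinatorial bound already established in Lemma~\ref{lemma:R_upperbound}, namely
\begin{equation}
\nonumber
R \leq \frac{1}{n} \log_2 \left( \frac{n!}{\prod_{\ell=1}^L \left(n P_{\mathscr{C}}(x^{(\ell)})\right)!}\right),
\end{equation}
and convert the multinomial coefficient into an entropy expression by applying Stirling's formula with explicit error terms to every factorial that appears. First I would recall the non-asymptotic Stirling bound, in the form $\log(m!) = m\log m - m + \tfrac12\log(2\pi m) + \tfrac{1}{12m} - \theta_m$ with $0 \le \theta_m \le \tfrac{1}{12m(12m+1)}$, or more simply the two-sided estimate $\sqrt{2\pi m}\,(m/e)^m e^{1/(12m+1)} \le m! \le \sqrt{2\pi m}\,(m/e)^m e^{1/(12m)}$. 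To get an \emph{upper} bound on $R$, I want an upper bound on $\log(n!)$ (use the $1/(12n)$ version) and a lower bound on each $\log\big((nP_{\mathscr{C}}(x^{(\ell)}))!\big)$ (use the $1/(12 n P_{\mathscr{C}}(x^{(\ell)})+1)$ version). This is exactly the source of the asymmetric-looking error terms $\tfrac{1}{12}$ and $\sum_\ell \tfrac{1}{12 P_{\mathscr{C}}(x^{(\ell)})+1}$ in~\eqref{EqCorRUpperBoundRelax}.

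Next I would collect the leading-order terms. Writing $p_\ell \triangleq P_{\mathscr{C}}(x^{(\ell)})$ so that $\sum_\ell p_\ell = 1$ and $\sum_\ell n p_\ell = n$, the $m\log m - m$ pieces combine as
\begin{equation}
\nonumber
n\log n - n - \sum_{\ell=1}^L \left( n p_\ell \log(n p_\ell) - n p_\ell \right) = -n\sum_{\ell=1}^L p_\ell \log p_\ell = n H(P_{\mathscr{C}}),
\end{equation}
since the $n\log n$ and $\sum_\ell n p_\ell \log n$ terms cancel and the linear terms cancel exactly. Dividing by $n$ and converting from nats to bits (or simply carrying $H$ in the same base, consistent with the statement) yields the dominant term $H(P_{\mathscr{C}})$. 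The half-log correction terms $\tfrac12\log(2\pi n)$ from the numerator and $\sum_\ell \tfrac12\log(2\pi n p_\ell)$ from the denominators, divided by $n$, give $\tfrac{1}{n}\big(\tfrac12\log(2\pi n) - \sum_\ell \tfrac12\log(2\pi n p_\ell)\big)$; expanding $\log(2\pi n p_\ell) = \log(2\pi p_\ell) + \log n$ and using $\sum_\ell \tfrac12 = \tfrac{L}{2}$ produces the $-\tfrac{\log n}{n}\cdot\tfrac{L-1}{2}$ term and regroups the remainder into $\tfrac1n\big(\log\sqrt{2\pi} - \sum_\ell \log\sqrt{2\pi p_\ell}\big)$, matching the stated expression. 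The $1/(12m)$-type terms, divided by $n$, give precisely $\tfrac{1}{n^2}\big(\tfrac{1}{12} - \sum_\ell \tfrac{1}{12 p_\ell + 1}\big)$ after substituting $m = n$ and $m = n p_\ell$.

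The main obstacle is bookkeeping rather than conceptual: one must be careful that each Stirling approximation is applied in the correct direction so that the net inequality points the right way, and one must verify that all $n p_\ell$ are positive integers (which holds because $\mathscr{C}$ is homogeneous and each codeword lies in $\mathcal{X}^n$, so the type has denominator $n$), so that the factorials and the error bounds are legitimate. A minor subtlety is the base of the logarithm — the statement mixes $R$ (in bits) with $H(\cdot)$ and bare $\log$'s — so I would fix the convention that $H$ and $\log$ are taken in the same base as $R$ throughout, or equivalently absorb the $\log_2 e$ factors into the definitions; either way the structure of the bound is unchanged. Assembling these pieces gives~\eqref{EqCorRUpperBoundRelax}.
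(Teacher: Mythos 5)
Your proposal is correct and follows the same route the paper intends: Corollary~\ref{CorRUpperBoundRelax} is obtained by applying the two-sided Stirling (Robbins) bounds to the multinomial coefficient in Lemma~\ref{lemma:R_upperbound}, with the directions chosen exactly as you describe. One small imprecision: the denominator correction terms are not \emph{precisely} $\frac{1}{n^2(12 P_{\mathscr{C}}(x^{(\ell)})+1)}$ after dividing by $n$ --- Stirling gives $\frac{1}{n(12 n P_{\mathscr{C}}(x^{(\ell)})+1)}$, and you need the extra elementary step $12 n P_{\mathscr{C}}(x^{(\ell)})+1 \leqslant n\left(12 P_{\mathscr{C}}(x^{(\ell)})+1\right)$ to weaken this to the stated form, which fortunately points in the right direction for an upper bound on $R$.
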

\begin{IEEEproof} 
The proof of Corollary \ref{CorRUpperBoundRelax} is presented in \cite{Zuhra-Inria-TR}.
\end{IEEEproof}

Note that all terms in~\eqref{EqCorRUpperBoundRelax}, except the entropy $H\left( P_{\mathscr{C}} \right)$, vanish with the block length $n$. This implies that the information rate is essentially constrained by the entropy of the channel input symbols. In particular, note that  $H\left( P_{\mathscr{C}}\right) \leqslant \log_2 L$.

\subsection{Information and Energy Trade-Off}

The results presented above lead to the following result. 
\begin{theorem}\label{TheoMainResult}
Given a homogeneous $(n,M,\epsilon,B,\delta)$-code denoted by $\mathscr{C}$ for the random transformation in~\eqref{EqChannelModel} of the form in~\eqref{Eqnm_code}, the following holds
\begin{IEEEeqnarray}{rcl}
\nonumber
R &\leq& 
H\left( P_{\mathscr{C}} \right)  + \frac{1}{n^2} \left( \frac{1}{12} - \sum_{\ell=1}^L \frac{1}{12 P_{\mathscr{C}}(x^{(\ell)}) +1}\right)  \\
\nonumber
&& + \frac{1}{n} \left(\log \left( \sqrt{2\pi}\right)   - \sum_{\ell=1}^L \log\sqrt{2\pi P_{\mathscr{C}}(x^{(\ell)})} \right) \\
\label{EqTheoremA}
&& - \frac{\log n}{n} \left( \frac{L-1}{2} \right);\\
\label{EqTheoremB}
B & \leq & \frac{1}{1-\delta} \sum_{\ell=1}^L P_{\mathscr{C}}(x^{(\ell)}) \mathbb{E}\left[g\left(x^{(\ell)} + W \right) \right]; \mbox{ and }\\
\nonumber
\epsilon & \geq & 1 - \exp \Big( -n H(P_{\mathscr{C}}) - n D(P_{\mathscr{C}} || Q) \\
\label{EqTheoremC}
    & & + n\log \sum_{j=1}^L\int_{\mathcal{E}_{j}} f_{Y|X}(y|x^{(j)}) \mathrm{d}y  \Big).
\end{IEEEeqnarray}
where, 
$P_{\mathscr{C}}$ is the type defined in~\eqref{eq:p_bar}; 
the pmf $Q$ is defined in~\eqref{EqQ}; 
for all $j \inCountK{L}$,  the sets $\mathcal{E}_{j}$ are defined in~\eqref{EqEell}; and
the expectation in~\eqref{EqTheoremB} is with respect to $W$, which  is a complex circularly symmetric Gaussian random variable whose real and imaginary parts have zero means and variances $\frac{1}{2}\sigma^2$.
\end{theorem}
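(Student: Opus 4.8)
The plan is to observe that Theorem~\ref{TheoMainResult} is nothing more than the conjunction of three results already established in this section, each applied to the same homogeneous $(n,M,\epsilon,B,\delta)$-code $\mathscr{C}$. No new machinery is needed; the only point to verify is that the hypotheses of the three constituent statements are mutually compatible, which is immediate, since two of them require exactly the homogeneity that is assumed in the theorem and the third requires nothing beyond $\mathscr{C}$ being an $(n,M,\epsilon,B,\delta)$-code of the form in~\eqref{Eqnm_code}.

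First I would establish~\eqref{EqTheoremA}. This is a verbatim restatement of the bound in Corollary~\ref{CorRUpperBoundRelax}, which holds for any homogeneous $(n,M,\epsilon,B,\delta)$-code; since $\mathscr{C}$ is homogeneous by assumption, the bound applies. For completeness one may recall that Corollary~\ref{CorRUpperBoundRelax} is itself obtained from Lemma~\ref{lemma:R_upperbound} by applying a Stirling-type expansion to the multinomial coefficient $n!/\prod_{\ell=1}^{L}(nP_{\mathscr{C}}(x^{(\ell)}))!$, but this has already been carried out and can simply be invoked.

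Next I would establish~\eqref{EqTheoremB}. This is precisely the statement of Lemma~\ref{LemmaB}, which holds for every $(n,M,\epsilon,B,\delta)$-code of the form in~\eqref{Eqnm_code}, with no homogeneity requirement whatsoever; in particular it holds for $\mathscr{C}$, and the random variable $W$ in~\eqref{EqTheoremB} is the same complex circularly symmetric Gaussian variable appearing there. Finally,~\eqref{EqTheoremC} is the statement of Corollary~\ref{CorEpsilonHomogeneous}, valid for any homogeneous $(n,M,\epsilon,B,\delta)$-code and hence for $\mathscr{C}$; here the pmf $Q$ and the sets $\mathcal{E}_{j}$ are exactly those defined in~\eqref{EqQ} and~\eqref{EqEell}, so the notation matches without any modification.

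Since all three inequalities hold simultaneously for the single code $\mathscr{C}$, the theorem follows. The only ``obstacle'' worth flagging is bookkeeping rather than mathematics: one must check that the common objects $P_{\mathscr{C}}$, $Q$, and $\{\mathcal{E}_{j}\}_{j=1}^{L}$ are defined consistently across the three invoked results — in particular that the minimizing pair $(i^\star,t^\star)$ in~\eqref{EqEell} that defines $\mathcal{E}_{\ell}$ coincides with the one used inside Corollary~\ref{CorEpsilonHomogeneous} via the homogeneity condition~\eqref{EqDitHC} of Definition~\ref{DefHC} — which holds by construction. Hence the proof reduces to citing Corollary~\ref{CorRUpperBoundRelax}, Lemma~\ref{LemmaB}, and Corollary~\ref{CorEpsilonHomogeneous} and noting their joint applicability to $\mathscr{C}$.
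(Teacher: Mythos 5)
Your proposal is correct and matches the paper's own proof exactly: the paper likewise derives~\eqref{EqTheoremA} from Corollary~\ref{CorRUpperBoundRelax}, \eqref{EqTheoremB} from Lemma~\ref{LemmaB}, and~\eqref{EqTheoremC} from Corollary~\ref{CorEpsilonHomogeneous}, applied jointly to the same homogeneous code $\mathscr{C}$. Your additional bookkeeping check on the consistency of $P_{\mathscr{C}}$, $Q$, and the sets $\mathcal{E}_j$ is a harmless elaboration of the same argument.
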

\begin{proof}
The proof of Theorem~\ref{TheoMainResult} follows from Corollary~\ref{CorRUpperBoundRelax}, Lemma~\ref{LemmaB}; and Corollary~\ref{CorEpsilonHomogeneous}.
\end{proof}
 Theorem~\ref{TheoMainResult} quantifies the trade-off between all parameters of a homogeneous $(n,M,\epsilon,B,\delta)$-code for the random transformation in~\eqref{EqChannelModel}. Let such a code be denoted by  $\mathscr{C}$ of the form in~\eqref{Eqnm_code}. Hence, both the upper bound on the information rate in~\eqref{EqTheoremA} and the upper bound on the energy rate in~\eqref{EqTheoremB} depend on the type $P_{\mathscr{C}}$. The information rate is essentially upper bounded by the entropy of $P_{\mathscr{C}}$. Hence, codes  whose codewords are such that every channel input symbol is used the same number of times are less constrained in terms of information rate. This is the case in which  $P_{\mathscr{C}}$ is a uniform distribution.
 Alternatively, using a uniform type $P_{\mathscr{C}}$ might dramatically constrain the energy transmission rate. For instance, if the constellation is such that for at least one pair $(x_1,x_2) \in \mathcal{X}^2$, it holds that $\mathds{E}\left[g(x_1 + W)\right] < \mathds{E}\left[g(x_2 + W)\right]$ where the random variable $W$ is defined in Lemma~\ref{LemmaB}, then using the symbol $x_1$ equally often as $x_2$ certainly constraints the energy rate. Codes than might potentially exhibit the largest energy rates are those in which the symbols $x$ that maximize $\mathds{E}\left[g(x + W)\right]$ are used more often. Clearly, this deviates from the uniform distribution and thus, constraints the information rate.

Another interesting trade-off appears between the DEP and EOP. From Corollary~\ref{CorDelta}, it follows that codes whose codewords contain mainly channel input symbols $x$ that maximize $\mathds{E}\left[g(x + W)\right]$ are more reliable from the perspective of energy transmission. Alternatively, from Corollary~\ref{CorEpsilonHomogeneous2}, it follows that codes that are more reliable in terms of information transmission are those whose codewords contain more channel input symbols with the smallest DEP. That is, when $P_{\mathscr{C}}$ approaches the pmf $Q$ in~\eqref{EqQ}. 


 \section{Examples}
  \begin{figure}[t]
  \centering
  \includegraphics[width=0.45\textwidth]{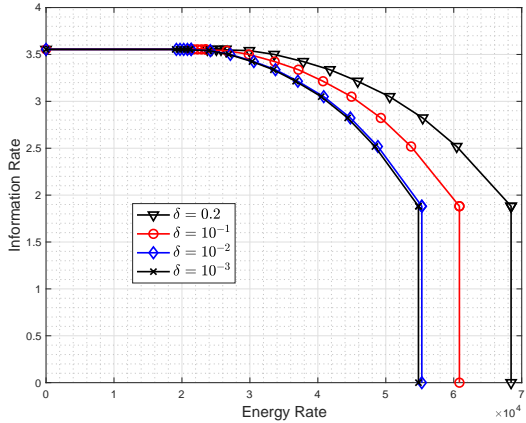}
  \caption{Upper bounds on the information rate in~\eqref{EqTheoremA} and energy rate in~\eqref{EqTheoremB} for a homogeneous code built upon a 16-QAM constellation as a function of the EOP $\delta$.}
\label{FigCapacityRegion}
\end{figure}

 \begin{figure}[t]
  \centering
  \includegraphics[width=0.45\textwidth]{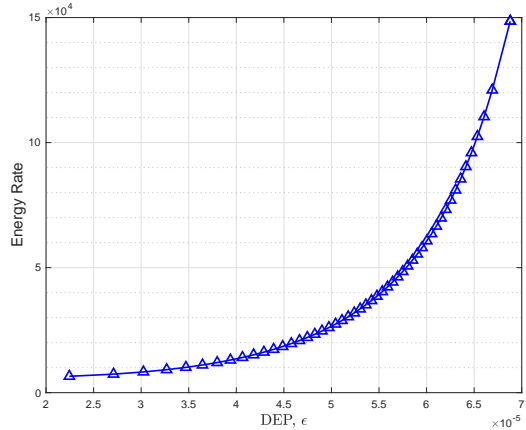}
  \caption{Upper bound on the energy rate in~\eqref{EqTheoremB} for a homogeneous code built upon a 16-QAM constellation as a function of the lower bound on the DEP in~\eqref{EqTheoremC}. }
\label{FigRateEnergyDEP}
\end{figure}

Consider a homogeneous $(n,M,\epsilon,B,\delta)$-code $\mathscr{C}$ (Definition~\ref{DefHC}) for the random transformation in~\eqref{EqChannelModel} of the form in~\eqref{Eqnm_code} built upon a $16$-QAM constellation. More specifically, the set of channel input symbols $\mathcal{X}$ in~\eqref{EqCIsymbols} is: $\mathcal{X} = \lbrace (5+\mathrm{i}5), (-5+\mathrm{i}5), (-5-\mathrm{i}5), (5-\mathrm{i}5), (15+\mathrm{i}5), (5+\mathrm{i}15), (-5+\mathrm{i}15), (-15+\mathrm{i}5), (-15-\mathrm{i}5), (-5-\mathrm{i}15),  (5-\mathrm{i}15), (15-\mathrm{i}5), (15+\mathrm{i}15), (-15+\mathrm{i}15), (-15-\mathrm{i}15), (15-\mathrm{i}15) \rbrace.$
The decoding sets $\mathcal{D}_1$, $\mathcal{D}_2$, $\ldots$, $\mathcal{D}_{16}$  in~\eqref{EqDitHC} are those obtained from the maximum-\`{a}-posteriori (MAP) estimator \cite{proakis} given the type $P_{\mathscr{C}}$ in~\eqref{eq:p_bar}. The block-length is fixed at $n = 80$ channel uses and the noise variance $\sigma^2 = 2$ in~\eqref{EqDensities}.
The energy harvested at the EH is calculated taking into account the non-linearities at the receiver described in \cite{varasteh2017wireless}. In particular, given a type $P_{\mathscr{C}}$, the expression  $\sum_{\ell=1}^L P_{\mathscr{C}}(x^{(\ell)}) \mathbb{E}\left[g\left(x^{(\ell)} + W \right) \right]$ in~\eqref{EqTheoremB} is obtained by using \cite[Proposition $1$]{varasteh2017wireless}, which is denoted by $P_{del}$ therein. The choice of parameters while using  \cite[Proposition $1$]{varasteh2017wireless} is $k_2 = k_4 = 1$ and $|h| = |\tilde{h}| = 1$. 

\noindent
In Figure~\ref{FigCapacityRegion}, given a type $P_{\mathscr{C}}$, the upper bound on the information transmission rate and the upper bound on the energy transmission rate in~\eqref{EqTheoremA} and~\eqref{EqTheoremB}, respectively, are depicted. Each point in the curves corresponds to a different type $P_{\mathscr{C}}$. Therein, it becomes evident that smaller EOPs imply smaller information and energy transmission rates.
The maximum information rate achievable with a $16$-QAM constellation is four bits per channel use. Nonetheless, when codewords are forced to exhibit the same type (homogeneous code), the information transmission rate does not exceed $3.51$ bits per channel use. This maximum holds when the energy transmission rate is smaller than $2 \times 10^4$ energy units, which corresponds to the maximum energy that might be delivered by codes whose type $P_{\mathscr{C}}$ approaches the uniform distribution on $\mathcal{X}$. Beyond this value, increasing the energy transmission rate implies deviating from the uniform distribution, which penalises the information transmission rate. The maximum energy rate implies types $P_{\mathscr{C}}$ that concentrate on the symbols $(15+\mathrm{i}15)$, $(-15+\mathrm{i}15)$, $(-15-\mathrm{i}15)$, $(15-\mathrm{i}15)$. A type $P_{\mathscr{C}}$ that approaches the uniform distribution over these four channel inputs induces a lower bound on the information transmission rate of $1.9$ bits per channel use. A type $P_{\mathscr{C}}$ that concentrates on one of these channel input symbols induces a zero information rate. 

\noindent
Figure~\ref{FigRateEnergyDEP} illustrates the upper bound on the energy transmission rate in~\eqref{EqTheoremB} as a function of the lower bound on the DEP in~\eqref{EqTheoremC} when the EOP is kept constant at  $\delta = 10^{-4}$ and $n =80$. More specifically, each point on the curve in Figure~\ref{FigRateEnergyDEP} for a given value of the DEP is associated with a type $P_{\mathscr{C}}$. This type is used to determine the upper bound on the energy transmission rate, which leads to the curve in Figure~\ref{FigRateEnergyDEP}.
Higher DEPs require types $P_{\mathscr{C}}$ with less constraints, which increases the bound on the energy rate.  

\balance
 \bibliographystyle{IEEEtran}
\bibliography{ITW2021}
\balance
 \end{document}